\documentclass[draftcls,onecolumn]{IEEEtran}

\ifCLASSINFOpdf
 \usepackage[pdftex]{graphicx}
  % declare the path(s) where your graphic files are
  % \graphicspath{{../pdf/}{../jpeg/}}
  % and their extensions so you won't have to specify these with
  % every instance of \includegraphics
 \DeclareGraphicsExtensions{.pdf,.jpeg,.png,.eps}
\else
  % or other class option (dvipsone, dvipdf, if not using dvips). graphicx
  % will default to the driver specified in the system graphics.cfg if no
  % driver is specified.
  % \usepackage[dvips]{graphicx}
  % declare the path(s) where your graphic files are
  % \graphicspath{{../eps/}}
  % and their extensions so you won't have to specify these with
  % every instance of \includegraphics
  % \DeclareGraphicsExtensions{.eps}
\fi

\usepackage{amsmath,amsfonts,amssymb}
\usepackage{sublabel}
\usepackage{amsthm}
\usepackage{wasysym}
\usepackage{subfigure}
\usepackage{hyperref}

\newtheorem{theorem}{Theorem}
\newtheorem{definition}{Definition}

\def\v#1{\underline{#1}}  % vector notation

\def\expect#1#2{\mathrm{E}_{#1}\left[#2\right]}
\newcommand{\NT}{N_{\rm T}}
\newcommand{\NR}{N_{\rm R}}

\newcommand{\her}{^{\rm H}}

\newcommand{\Id}{{\bf I}}

% correct bad hyphenation here
\hyphenation{op-tical net-works semi-conduc-tor}

\begin{document}
%
% paper title
% can use linebreaks \\ within to get better formatting as desired
\title{Receive Diversity and Ergodic Performance of Interference Alignment on the MIMO Gaussian Interference Channel}

% author names and affiliations
% use a multiple column layout for up to three different
% affiliations
\author{\IEEEauthorblockN{Maxime Guillaud\\}
\IEEEauthorblockA{Institute of Communications and Radio-Frequency Engineering\\
Vienna University of Technology\\
Vienna, Austria\\
Email: {\small \href{mailto:guillaud@tuwien.ac.at}{\texttt{guillaud@tuwien.ac.at}}}}}

% \and
% \IEEEauthorblockN{Homer Simpson}
%\IEEEauthorblockA{Twentieth Century Fox\\
%Springfield, USA\\
%Email: homer@thesimpsons.com}
%\and
%\IEEEauthorblockN{James Kirk\\ and Montgomery Scott}
%\IEEEauthorblockA{Starfleet Academy\\
%San Francisco, California 96678-2391\\
%Telephone: (800) 555--1212\\
%Fax: (888) 555--1212}}

% conference papers do not typically use \thanks and this command
% is locked out in conference mode. If really needed, such as for
% the acknowledgment of grants, issue a \IEEEoverridecommandlockouts
% after \documentclass

% for over three affiliations, or if they all won't fit within the width
% of the page, use this alternative format:
% 
%\author{\IEEEauthorblockN{Michael Shell\IEEEauthorrefmark{1},
%Homer Simpson\IEEEauthorrefmark{2},
%James Kirk\IEEEauthorrefmark{3}, 
%Montgomery Scott\IEEEauthorrefmark{3} and
%Eldon Tyrell\IEEEauthorrefmark{4}}
%\IEEEauthorblockA{\IEEEauthorrefmark{1}School of Electrical and Computer Engineering\\
%Georgia Institute of Technology,
%Atlanta, Georgia 30332--0250\\ Email: see http://www.michaelshell.org/contact.html}
%\IEEEauthorblockA{\IEEEauthorrefmark{2}Twentieth Century Fox, Springfield, USA\\
%Email: homer@thesimpsons.com}
%\IEEEauthorblockA{\IEEEauthorrefmark{3}Starfleet Academy, San Francisco, California 96678-2391\\
%Telephone: (800) 555--1212, Fax: (888) 555--1212}
%\IEEEauthorblockA{\IEEEauthorrefmark{4}Tyrell Inc., 123 Replicant Street, Los Angeles, California 90210--4321}}

% use for special paper notices
%\IEEEspecialpapernotice{(Invited Paper)}

% make the title area
\maketitle

\begin{abstract}
We consider interference alignment (IA) over $K$-user Gaussian MIMO interference channel (MIMO-IC) when the SNR is not asymptotically high. We introduce a generalization of IA which enables receive diversity inside the interference-free subspace. We generalize the existence criterion of an IA solution proposed by Yetis et al. to this case, thereby establishing a multi-user diversity-multiplexing trade-off (DMT) for the interference channel.
Furthermore, we derive a closed-form tight lower-bound for the ergodic mutual information achievable using IA over a Gaussian MIMO-IC with Gaussian i.i.d. channel coefficients at arbitrary SNR, when the transmitted signals are white inside the subspace defined by IA. Finally, as an application of the previous results, we compare the performance achievable by IA at various operating points allowed by the DMT, to a recently introduced distributed method based on game theory.
\end{abstract}
% IEEEtran.cls defaults to using nonbold math in the Abstract.
% This preserves the distinction between vectors and scalars. However,
% if the conference you are submitting to favors bold math in the abstract,
% then you can use LaTeX's standard command \boldmath at the very start
% of the abstract to achieve this. Many IEEE journals/conferences frown on
% math in the abstract anyway.

% no keywords

% For peer review papers, you can put extra information on the cover
% page as needed:
% \ifCLASSOPTIONpeerreview
% \begin{center} \bfseries EDICS Category: 3-BBND \end{center}
% \fi
%
% For peerreview papers, this IEEEtran command inserts a page break and
% creates the second title. It will be ignored for other modes.
\IEEEpeerreviewmaketitle

\section{Introduction}
\label{section_introduction}

Interference alignment (IA) was first considered in \cite{Jafarit08} as a coding technique for the two-user Multiple-Input Multiple-Output (MIMO) X channel.  The degrees of freedom (DoF) region for this channel has been analyzed in \cite{Jafarit08} for an arbitrary number of antennas per user $M>1$, and IA was shown to achieve the maximum $\frac{4}{3}M$ degrees of freedom achievable on this channel, based only on linear precoding at the transmitters and zero-forcing at the receivers. With IA, thanks to the alignment of all interfering signals in the same subspace from the point of view of each receiver, interference can be removed simply through zero-forcing filtering.

The scheme was later generalized to the $K$-user interference channel \cite{Cadambeit08}, where it was shown to achieve almost surely a sum-rate multiplexing gain of $\frac{K}{2}$ per time, frequency and antenna dimension. In comparison, independent operation of $K$  \emph{isolated} point-to-point links would incur a sum-rate multiplexing gain of $K$ per dimension. This indicates that IA allows virtually interference-free communications, at the cost of halving the multiplexing gains with respect to what the users could achieve over isolated point-to-point links.

In the $K$-user Gaussian MIMO IC, under mild hypotheses on the distribution of the channel coefficients, the existence with probability 1 of a solution to the IA problem depends only on the dimensions of the problem (number of users $K$ and number of antennas at each node). An existence criterion was introduced in  \cite{Yetis_Jafar_feasibility_conditions_IA_Globecom09}. An iterative algorithm was introduced in \cite{Gomadamit08} to find numerically the precoding matrices achieving IA. Closed-form solutions are available for certain particular cases (e.g. when all nodes have $N=K-1$ antennas, \cite{Tresch_Guillaud_Riegler_SSP09}).
 
Although the high-SNR properties of IA (in the form of the DoF) have been thoroughly studied, little is known about how fast the performance of IA degrades in the presence of noise.
There have been attempts reported in the literature to apply IA to systems where the SNR is not asymptotically high, despite the fact that IA is arguably suboptimal in that case (this is mostly due to the relative ease of implementation of IA w.r.t. other linear beamforming techniques such as \cite{Shi_Honig_Utschick_interference_pricing_ICC09}, where the optimization of e.g. sum-rate is performed). Evaluation of the mutual information (or of the sum-rate) achieved by IA by means of Monte-Carlo simulations \cite{Tresch_Guillaud_PIMRC09} or using measured channels \cite{ElAyach_etal_IA_over_measured_channels_VT10} have been performed. However, the theoretical performance limits of IA in this case remain largely unknown.\\

The objective of this paper is to explore the application of IA to the MIMO-IC at non-asymptotic SNR. Our contributions are as follows:
\begin{enumerate}
\item We generalize IA to the case where the receiver exploits spatial diversity inside the interference-free subspace created by IA. Specifically, we study the situation where the codimension of the interference subspace at the receiver is not equal to the number of transmit DoF, and generalize the existence criterion of Yetis et al. \cite{Yetis_Jafar_feasibility_conditions_IA_Globecom09} to that case. This enables the definition of a diversity-multiplexing trade-off (DMT) for the interference channel.
\item We derive a closed-form expression for the ergodic mutual information achievable using IA over a Gaussian MIMO-IC with Gaussian i.i.d. channel coefficients. The formula is valid for arbitrary sets of transmit powers and channel noise variances, i.e. are not restricted to any SNR range, for the case where each node transmits a spatially white signal (inside the chosen subspace). It enables the asymptotic analysis (in the number of users and antennas) of interference-aligned systems, which was not possible previously due to the intractable complexity of Monte-Carlo methods applied to large systems, and enables comparisons between the various operating points allowed by the DMT.
%Furthermore, all received dimensions are considered (conversely to the case where the receiver merely projects onto the interference-free subspace)
\item We compare numerically the performance of IA to distributed algorithms, providing insight on their relative performance.\\
\end{enumerate}

This article is organized as follows: the system model and the definition of interference alignment for the Gaussian MIMO-IC are introduced in Section \ref{section_sysmodel}. Section \ref{section_non_square_eqchannel} generalizes the definition of IA to include receive diversity, and introduces a generalized achievability criterion.
In Section \ref{section_ergodic_MI_for_IA}, we derive closed-form formulas for the ergodic mutual information achieved by the scheme developed previously, for two types of receiver-side processing, and validate them through simulations.
Section \ref{section_applications} presents a comparison of the performance (in terms of ergodic rates) of IA as defined before, for various DMT points, to that achieved with the distributed transmit covariance optimization approach recently proposed in \cite{Scutari_MIMO_Iterative_WTF_IT09}.

Notation: In the sequel, $\expect{\mathrm{X}}{\cdot}$ denotes the expectation operator over $\mathrm{X}$, while $(\cdot)\her$ denotes the Hermitian transpose operation.

\section{System Model}
\label{section_sysmodel}
In this section, we first present our system model and recall the definition of interference alignment over the MIMO-IC.

We consider a $K$-user MIMO interference channel where transmitter $j=1\ldots K$ is equipped with $M_j$ antennas and receiver $i$ with $N_i$ antennas. The MIMO channel is assumed to be frequency-flat. Here, we will consider the case where the channel between transmitter $j$ and receiver $i$, denoted by the $N_i\times M_j$ matrix $\mathbf{H}_{ij}$, is Rayleigh fading, i.e. the elements of $\mathbf{H}_{ij}$ are complex Gaussian i.i.d. random variables, with zero mean and unit variance.

Let us focus on the $i^{\textrm{th}}$ receiver $(1\leq i \leq K)$, which receives interference from other transmitters $j\neq i$ in addition to its intended signal.
The discrete-time system model is given by
\begin{equation}\label{equ_yibar}
\mathbf{y}_i =  \mathbf{H}_{ii}\mathbf{V}_{i} \mathbf{s}_i +
\sum_{j=1,j\neq i}^{K}\mathbf{H}_{ij} \mathbf{V}_{j} \mathbf{s}_j + \mathbf{n}_i,
\end{equation}
where $\mathbf{s}_i \in \mathbb{C}^{d_i\times 1}$ is a vector representing the signal from transmitter $i$, and $\mathbf{V}_{i}\in \mathbb{C}^{M_i\times d_i}$ is the precoding matrix at transmitter $i$. Furthermore, $\mathbf{n}_i$ accounts for the thermal noise generated in the radio frequency front-end of the receiver and interference from sources other than the interfering transmitters.

IA is achieved with degrees of freedom $(d_1,\ldots, d_K)$ (where each $d_i$ corresponds to the multiplexing gain achieved for a transmitter-receiver pair, i.e. $d_i$ streams per transmitter are spatially pre-coded at transmitter $i$) iff there exist $M_i\times d_i$ truncated unitary matrices\footnote{Note that the requirement that the columns of $\mathbf{U}_{i}$ and $\mathbf{V}_{i}$ be orthonormal was not present in the original definition of IA in \cite{Gomadamit08}. However it is easy to see that, since they must be full column rank as per eq.~\eqref{equ_IAcondB}, the $\mathbf{R}$ factors in the QR decompositions of all $\mathbf{U}_{i}$'s and $\mathbf{V}_{i}$'s must be invertible, and can therefore be discarded without changing the alignment criterion in \eqref{equ_IAcondA}. In the sequel, we assume w.l.o.g. that all $\mathbf{U}_{i}$'s and $\mathbf{V}_{i}$'s are truncated unitary matrices.} (precoding matrices) $\mathbf{V}_{i}$ and $N_i\times d_i$ truncated unitary matrices (zero-forcing interference suppression matrices) $\mathbf{U}_{i}$ such that, for $i=1,\hdots,K$,
%\sublabon{equation}
\begin{align}
\mathbf{U}_{i}\her \mathbf{H}_{ij}\mathbf{V}_{j} &= 0, \quad  \forall j\neq i, \label{equ_IAcondA}\\
\textrm{rank}\left(\mathbf{U}_{i}\her \mathbf{H}_{ii}\mathbf{V}_{i} \right)&= d_i, \quad  \forall i. \label{equ_IAcondB}
\end{align}
%\sublaboff{equation}

An iterative algorithm \cite[Algorithm 1]{Gomadamit08}, which is based on the minimization of an interference leakage metric (zero leakage is equivalent to the system of equations in \eqref{equ_IAcondA} and \eqref{equ_IAcondB}), was introduced to find the precoding matrices and to verify the achievability of interference channel settings. At every iteration, the algorithm from \cite{Gomadamit08} involves the computation of $K$ eigenvalue problems. Furthermore, depending on the interference channel setting the convergence speed can vary significantly. Nevertheless, the iterative algorithm provides numerical insight into the feasibility of IA for the $K$-user  MIMO interference channel with arbitrary channel dimensions and for any IA multiplexing gains.

If a solution to the system of equations \eqref{equ_IAcondA}-\eqref{equ_IAcondB} is found, we define the projection receiver as a receiver where $\mathbf{y}_i$ is projected onto the interference-free subspace before decoding. In other words, the decoder operates on $\bar{\mathbf{y}}_i = \mathbf{U}_{i}\her \mathbf{y}_i$. Note that 
\begin{align}\label{equ_yibar_zf}
\bar{\mathbf{y}}_i &= \mathbf{U}_{i}\her\mathbf{H}_{ii}\mathbf{V}_{i} \mathbf{s}_i + \sum_{j\neq i}\mathbf{U}_{i}\her\mathbf{H}_{ij} \mathbf{V}_{j} \mathbf{s}_j + \mathbf{U}_{i}\her\mathbf{n}_i,\nonumber \\
              &= \mathbf{U}_{i}\her\mathbf{H}_{ii}\mathbf{V}_{i} \mathbf{s}_i+ \mathbf{U}_{i}\her\mathbf{n}_i,
\end{align}
i.e. the interference terms present in \eqref{equ_yibar} are perfectly suppressed, due to \eqref{equ_IAcondA}. Note that the energy of the signal part that lies in the interference subspace is lost. However, this power loss is irrelevant in the context of the degrees of freedom analysis. \\

\section{Multi-User Diversity-Multiplexing Trade-Off over the Interference Channel}
\label{section_non_square_eqchannel}

In this section, we extend the arguments and notations of \cite{Yetis_Jafar_feasibility_conditions_IA_Globecom09} regarding the existence of a solution to the IA problem to the case where the codimension of the interference subspace at the receiver is not equal to the number of transmit DoF, permitting for extra receive diversity.

In order to do this, we let $\mathbf{U}_{i}$ and $\mathbf{V}_{i}$ have non-equal numbers of columns. Let $d_i$ denote the number of columns of the $i$-th transmitter-side precoder $\mathbf{V}_{i}$, as before, while $d'_i$ denotes the number of columns of $\mathbf{U}_{i}$. For any $M_i \geq d_i\geq 1$ and $N_i \geq d'_i\geq 1$,  $\textrm{rank}\left(\mathbf{U}_{i}\her \mathbf{H}_{ii}\mathbf{V}_{i} \right) \leq \min(d_i, d'_i)$, since the matrix $\mathbf{U}_{i}\her \mathbf{H}_{ii}\mathbf{V}_{i}$ is of dimensions $d'_i \times d_i$.
Therefore, from a DoF point of view, it is clear that taking $d'_i=d_i$ is optimal since $d'_i<d_i$ would not permit to achieve the desired $d_i$ degrees of freedom for user $i$, while $d'_i>d_i$ would impose more constraints (reduce the set of problem dimensions for which a solution exists) without improving the achieved DoF.
However, if one considers non-asymptotic SNR situations, the DoF metric is not the only relevant metric anymore. In particular, taking $d'_i>d_i$ ensures that the interference-free equivalent channel $\mathbf{U}_{i}\her\mathbf{H}_{ii}\mathbf{V}_{i}$ in \eqref{equ_yibar_zf} is tall, i.e. permits to increase the available receive diversity.

The IA conditions represented by the matrix equality \eqref{equ_IAcondA} can be rewritten as a set of scalar equality equations: for $1\leq i\neq j \leq K$,
\begin{equation} 
  {\v{u}_m^{[i]}}\her \mathbf{H}_{ij}\v{v}_n^{[j]} =0, \forall (n,m) \in \{1,\ldots,d_j\}\times \{1,\ldots,d'_i\}, \label{Eijmn}
\end{equation}
where $\v{u}_m^{[i]}$ and $\v{v}_n^{[j]}$ are respectively the $m$th column of $\mathbf{U}_{i}$ and the $n$th column of $\mathbf{V}_{j}$. Equation \eqref{Eijmn} is denoted $E_{ij}^{mn}$. %Note that here, unlike in \cite{Yetis_Jafar_feasibility_conditions_IA_Globecom09}, we can have $d'_i\neq d_i$, i.e. we allow the matrix $\mathbf{U}_{i}\her\mathbf{H}_{ii}\mathbf{V}_{i} $ to be non-square.

The rest of the proof proceeds in a fashion similar to \cite{Yetis_Jafar_feasibility_conditions_IA_Globecom09}: denoting by $\mathrm{var}(E_{ij}^{mn})$ the set of non-redundant variables involved in \eqref{Eijmn}, its cardinality is
\begin{equation} \label{eq_cardinality_Eijmn}
\left|\mathrm{var}(E_{ij}^{mn})\right| = (M_j-d_j)+(N_i-d'_i).
\end{equation}
The complete set of scalar equations equivalent to \eqref{equ_IAcondA} is therefore 
{\small
\begin{equation}
\mathcal{E} = \left\{ E_{ij}^{mn} | 1\leq i\neq j \leq K, n \in \{1,\ldots,d_j\},  m \in \{1,\ldots,d'_i\}\right\}.
\end{equation} }

We now recall the definition of a proper system from \cite{Yetis_Jafar_feasibility_conditions_IA_Globecom09}:
\begin{definition}
\label{def_proper_nonsymmetric}
The system of equation $\mathcal{E}$  is proper iff for any $S \subset \mathcal{E}$,
\begin{equation} \label{eq_definition_proper}
 |S| \leq \left| \bigcup_{E\in S}  \mathrm{var}(E)\right|.
\end{equation}
In other words, a system is proper iff for all subsets of equations, the number of variables involved is at least as large as the number of equations.
\end{definition}

\cite[Section IV]{Yetis_Jafar_feasibility_conditions_IA_Globecom09} argues that proper systems  admit a solution almost surely, albeit without providing a formal proof of this claim. Nevertheless, this criterion has been found experimentally to be reliable, and we will therefore rely on it as well in the sequel.

We now particularize the existence criterion for our case of interest, namely symmetric systems with $d'_i\neq d_i$.

\begin{theorem}[Symmetric system]
\label{thm_feasibility_D_Dp_symmetric}
A symmetric systems where all transmitters have the same number of antennas $M_j=\NT, \, \forall j$ and all receivers have the same number of antennas $N_i=\NR, \, \forall i$, and where $\mathbf{V}_{i}$ and $\mathbf{U}_{i}$ have respectively $d_i=d$ and $d'_i=d'$ columns for all users,  is proper iff 
\begin{equation}
d(\NT-d)+d'(\NR-d')-d d' (K-1)\geq 0. \label{sym_proper_condition}
\end{equation}
\end{theorem}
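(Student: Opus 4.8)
The plan is to translate the properness requirement of Definition~\ref{def_proper_nonsymmetric}, which quantifies over \emph{all} subsets $S\subseteq\mathcal{E}$, into the single scalar inequality~\eqref{sym_proper_condition}, exploiting the full symmetry of the setting together with the per-column structure of the variables: by~\eqref{eq_cardinality_Eijmn}, the free variables split into $\NT-d$ variables attached to each column of the $\mathbf{V}_j$'s and $\NR-d'$ variables attached to each column of the $\mathbf{U}_i$'s, and the variable sets of distinct columns are disjoint. The necessity of~\eqref{sym_proper_condition} is then immediate: taking $S=\mathcal{E}$ in~\eqref{eq_definition_proper} yields $K(K-1)dd'\le Kd(\NT-d)+Kd'(\NR-d')$, which is exactly~\eqref{sym_proper_condition} after dividing by $K$. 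The real work lies in the converse.

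For the converse I would bound, for an \emph{arbitrary} $S$, the deficiency $\bigl|\bigcup_{E\in S}\mathrm{var}(E)\bigr|-|S|$ from below. Let $p_j$ (resp.\ $q_i$) denote the number of columns of $\mathbf{V}_j$ (resp.\ $\mathbf{U}_i$) appearing in some equation of $S$, so $0\le p_j\le d$ and $0\le q_i\le d'$. Each touched $\mathbf{V}_j$-column contributes $\NT-d$ distinct variables and each touched $\mathbf{U}_i$-column contributes $\NR-d'$ distinct variables, while $|S|\le\sum_{i\neq j}q_ip_j$ is the number of equations that can be built from the touched columns subject to $i\neq j$. Hence
\[
\Bigl|\bigcup_{E\in S}\mathrm{var}(E)\Bigr|-|S|\ \ge\ g(p,q):=(\NT-d)\sum_j p_j+(\NR-d')\sum_i q_i-\sum_{i\neq j}q_ip_j ,
\]
so it suffices to prove $g(p,q)\ge0$ on the box $0\le p_j\le d,\ 0\le q_i\le d'$ whenever~\eqref{sym_proper_condition} holds.

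The key structural observation is that $g$ is affine in each variable separately (the terms containing $p_k$ are $p_k\bigl[(\NT-d)-\sum_{i\neq k}q_i\bigr]$, and symmetrically for $q_k$), hence multilinear; its minimum over the box is therefore attained at a vertex, i.e.\ at $p_j\in\{0,d\}$, $q_i\in\{0,d'\}$. Such a vertex is described by the set $A$ of users with $p_j=d$ and the set $B$ of users with $q_i=d'$; with $a=|A|$, $b=|B|$ and $\sum_{i\neq j}q_ip_j=dd'\,(ab-|A\cap B|)$, the minimum over vertices with prescribed $a,b$ is obtained by minimizing the overlap, $|A\cap B|=\max(0,a+b-K)$. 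This reduces the claim to showing, for integers $0\le a,b\le K$,
\[
\phi(a,b):=d(\NT-d)\,a+d'(\NR-d')\,b-dd'\,ab+dd'\max(0,a+b-K)\ \ge\ 0 ,
\]
under the hypothesis~\eqref{sym_proper_condition}, which is exactly $\phi(K,K)=K\bigl[d(\NT-d)+d'(\NR-d')-dd'(K-1)\bigr]\ge0$.

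Finally I would minimize $\phi$. For fixed $a$ it is affine in $b$ on each of the ranges $b\le K-a$ and $b\ge K-a$, so its integer minimum lies in $\{0,\,K-a,\,K\}$. The values at $b=0$ and $b=K$ reduce to $d(\NT-d)a\ge0$ and to $\min\bigl(d'(\NR-d')K,\ \phi(K,K)\bigr)\ge0$, so the only binding case is the anti-diagonal $a+b=K$. There, using $d(\NT-d)+d'(\NR-d')\ge dd'(K-1)$ together with nonnegativity to bound $d(\NT-d)a+d'(\NR-d')(K-a)\ge dd'(K-1)\min(a,K-a)$, the required inequality collapses to $(K-1)\min(a,K-a)\ge a(K-a)$. \emph{This last step is the crux, and it is precisely where integrality is essential}: dividing by $\min(a,K-a)$ it reads $K-1\ge\max(a,K-a)$, which holds exactly because $a$ is an integer with $1\le a\le K-1$. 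Indeed the continuous relaxation of $\phi$ can dip strictly below zero on the anti-diagonal, so a naive convexity argument fails; nonnegativity is recovered only at integer active-user counts. Assembling the three cases gives $\phi\ge0$, hence $g\ge0$, hence properness, which closes the converse.
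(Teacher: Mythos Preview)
Your argument is correct. The necessity direction (take $S=\mathcal{E}$) matches the paper exactly, and your sufficiency argument---the multilinear reduction to vertices $(p_j,q_i)\in\{0,d\}\times\{0,d'\}$, the overlap minimization $|A\cap B|=\max(0,a+b-K)$, the piecewise-affine analysis of $\phi$ in $b$, and the final integrality step $(K-1)\min(a,K-a)\ge a(K-a)$---all check out. Your observation that the continuous relaxation can fail (e.g.\ $\alpha=dd'(K-1)$, $\beta=0$ gives $\phi(a,K-a)=dd'\,a(a-1)<0$ for non-integer $a\in(0,1)$) is also correct and nicely isolates where integrality enters.

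The route, however, is genuinely different from the paper's. The paper does not verify the subset condition~\eqref{eq_definition_proper} for arbitrary $S$; it simply invokes the method of \cite{Yetis_Jafar_feasibility_conditions_IA_Globecom09}, asserting that for a \emph{symmetric} system properness is equivalent to the single global count $N_v\ge N_e$, and then computes $N_v=K[d(\NT-d)+d'(\NR-d')]$ and $N_e=K(K-1)dd'$ to obtain~\eqref{sym_proper_condition}. What you have done is supply, in a self-contained way, the reduction that the paper outsources to the reference: you prove that the worst subset \emph{is} the full system. This buys you independence from the cited argument (which was formulated for $d'=d$) and makes explicit why symmetry is what collapses the $2^{|\mathcal{E}|}$ subset checks to one inequality. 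The paper's approach, by contrast, is a two-line appeal to prior work; it is shorter but less self-contained in the generalized $d'\neq d$ setting.
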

\begin{proof}
We follow again the method outlined in \cite{Yetis_Jafar_feasibility_conditions_IA_Globecom09}. A symmetric is proper iff the total number of variables $N_v=|\mathrm{var}(\mathcal{E})|$ is equal to or greater than the total number of equations $N_e=|\mathcal{E}|$. Since
\begin{eqnarray} \label{eq_proper_symmetriccase}
 N_v &=& \sum_{k=1}^K d_k\left( M_k-d_k\right) + d'_k\left( N_k-d'_k\right) \\
 &=& K\left[ d(\NT-d)+d'(\NR-d')\right],
\end{eqnarray}
and
\begin{eqnarray}
 N_e &=& \sum_{\scriptsize \begin{array}{c} 1\leq j,k\leq K\\ j\neq k \end{array}} d'_k d_j \\
 &=& K (K-1) d d',
\end{eqnarray}
$N_v \geq N_e$ is equivalent to \eqref{sym_proper_condition}.
\end{proof}

Essentially, Definition~\ref{def_proper_nonsymmetric} (or Theorem~\ref{thm_feasibility_D_Dp_symmetric} for symmetric systems), together with our new definition of IA including receive diversity in \eqref{Eijmn}, establish a multi-user DMT applicable to the interference channel, in a way reminiscent of the result of Zheng and Tse \cite{zheng_tse_diversity_multiplexing}. If the system is proper (and thus IA is feasible), user $i$ enjoys multiplexing gain $d_i$ and receive diversity $d'_i-d_i+1$. For a given number of users and channel sizes, condition \eqref{eq_definition_proper} (or \eqref{eq_proper_symmetriccase} for the symmetric case) governs how the channel degrees of freedom can be traded-off between diversity, multiplexing, and between users.\\

\section{Ergodic Mutual Information for Interference Alignment}
\label{section_ergodic_MI_for_IA}
In this section, we introduce closed-form formulas for the ergodic mutual information achieved by IA over the Gaussian MIMO-IC, when the channel coefficients are Gaussian i.i.d. Here, we assume that the additive noise $\mathbf{n}_i$ is complex Gaussian circularly symmetric, with covariance matrix $\sigma^2\Id$.

\subsection{Optimum Receiver}
\label{section_ergodic_MI_optimumreceiver}
Conversely to previous results \cite{Guillaud_Tresch_NewcomWS_BCN_09} based on the equivalent channel (where the part of the signal from the transmitter of interest lying in the interference subspace is discarded, see \eqref{equ_yibar}), we characterize here the mutual information $I(\mathbf{s}_k;\mathbf{y}_k|\mathbf{H})$. In cases where the number or the power of the interferers is low, this mutual information can be significantly higher than $I(\mathbf{s}_i;\bar{\mathbf{y}}_i|\mathbf{H})$.

We build upon a recent result of Chiani et al. \cite{Chiani_Ergodic_MI_intf_networks_IT10}, where the ergodic mutual information of a single-user MIMO link is characterized analytically for Rayleigh fading. Let us recall it here:

\begin{theorem}{(\cite{Chiani_Ergodic_MI_intf_networks_IT10}, Theorem 1)}
Let $\mathcal{C}_{\mathrm{SU}} \left( n,p,\Phi \right) = \expect{\mathrm{\mathbf{H}}}{ \log \det \left( \mathrm{\mathbf{I}}_{p} + \mathrm{\mathbf{H}} \Phi \mathrm{\mathbf{H}}^H\right) }$,
where $\mathrm{\mathbf{H}}$ is a $p\times n$ matrix with complex Gaussian i.i.d. coefficients of unit variance, and $\Phi$ is any positive definite matrix of dimension $n \times n$.
Letting $\mu_{1}> \mu_{2} > \ldots > \mu_{L}$ denote the $L$ distinct eigenvalues of $\Phi^{-1}$, and $m_1,\ldots, m_L$ their respective multiplicities ($\sum_{i=1}^L m_i=n$),
\begin{equation}
 \mathcal{C}_{\mathrm{SU}} \left( n,p,\Phi \right) = K \sum_{k=1}^{n_\mathrm{min}} \det\left( \mathbf{R}^{(k)}\right),
\end{equation}
where $n_\mathrm{min}=\min(n,p)$,
\begin{equation}
 K = \frac{(-1)^{p(n-n_\mathrm{min})}}{\Gamma_{n_\mathrm{min}}(p)}\frac{\prod_{i=1}^L \mu_{i}^{m_i p}}{\prod_{i=1}^L \Gamma_{(m_i)}(m_i) \prod_{i<j}(\mu_{i}-\mu_{j})^{m_i m_j}},
\end{equation}
$\mathbf{R}^{(k)}$ is a $n\times n$ matrix with $i,j$-th element
\begin{equation}
  r_{i,j}^{(k)} = \left\{ \begin{array}{l}
  (-1)^{a_i}\int_0^\infty x^{p-n_\mathrm{min}+j-1+a_i} e^{-x \mu_{e_i}} \mathrm{d} x \\
  \hspace{3cm} \mathrm{for} \, j=1,\ldots,n_\mathrm{min}, j\neq k,\\
  (-1)^{a_i}\int_0^\infty x^{p-n_\mathrm{min}+j-1+a_i} e^{-x \mu_{e_i}} \log(1+x) \mathrm{d} x \\
  \hspace{3cm} \mathrm{for} \, j= k,\\
  (n-j)!/(n-j-a_i)! \mu_{e_i}^{n-j-a_i} \\
  \hspace{3cm} \mathrm{for} \, n_\mathrm{min}+1\leq j \leq n,
  \end{array} \right.
\end{equation}
$e_i$ is the unique integer such that $m_1+\ldots+m_{e_i-1} < i \leq m_1+\ldots+m_{e_i}$, and $a_i = \sum_{k=1}^{e_i} m_k -i$.
\end{theorem}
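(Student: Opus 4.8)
The plan is to reduce the matrix expectation to an integral over the nonzero eigenvalues of $\mathbf{H}\Phi\mathbf{H}\her$ and then to collapse that multidimensional integral into a single determinant. First I would write $\log\det(\mathbf{I}_p + \mathbf{H}\Phi\mathbf{H}\her) = \sum_{k=1}^{n_\mathrm{min}} \log(1+\lambda_k)$, where $\lambda_1,\ldots,\lambda_{n_\mathrm{min}}$ are the generically nonzero eigenvalues of $\mathbf{H}\Phi\mathbf{H}\her$; these coincide with the nonzero eigenvalues of the correlated Wishart matrix $\Phi^{1/2}\mathbf{H}\her\mathbf{H}\Phi^{1/2}$, obtained by writing $\mathbf{H}\Phi\mathbf{H}\her=\mathbf{G}\mathbf{G}\her$ with $\mathbf{G}=\mathbf{H}\Phi^{1/2}$. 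Thus $\mathcal{C}_{\mathrm{SU}}(n,p,\Phi)$ becomes the expectation of a symmetric function of the $\lambda_k$, and the entire problem reduces to knowing their joint density.

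The crucial ingredient is the joint eigenvalue density of this correlated Wishart ensemble. For $\Phi^{-1}$ with distinct eigenvalues I would use the classical derivation: start from the Gaussian density of $\mathbf{H}$, pass to $\mathbf{W}=\mathbf{H}\Phi\mathbf{H}\her$, diagonalize $\mathbf{W}=\mathbf{Q}\Lambda\mathbf{Q}\her$, and integrate the eigenvector matrix $\mathbf{Q}$ over the unitary group. This Harish--Chandra--Itzykson--Zuber integral evaluates to a ratio of $\det[e^{-\lambda_i \mu_j}]$ over the two Vandermonde determinants $\prod_{i<j}(\mu_i-\mu_j)$ and $\prod_{i<j}(\lambda_i-\lambda_j)$. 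The resulting joint density of $(\lambda_1,\ldots,\lambda_{n_\mathrm{min}})$ takes a Vandermonde-times-determinant form, the columns carrying weights $\lambda^{p-n_\mathrm{min}}e^{-\lambda\mu}$, with a normalization accounting separately for the $p\geq n$ and $p<n$ cases. This is precisely what fixes the prefactor $K$, the $\mu$-dependence in the integrand, and the exponent $p-n_\mathrm{min}+j-1$ appearing in $r_{i,j}^{(k)}$.

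With the density in this shape, I would carry out the integration via the Andr\'eief (Gram--de Bruijn) identity, which turns the integral of a product of two determinants against a symmetric weight into a single determinant whose entries are the scalar integrals of the paired basis functions. Inserting the weight $\sum_k\log(1+\lambda_k)$ and using linearity produces exactly a sum over $k=1,\ldots,n_\mathrm{min}$ of determinants $\det(\mathbf{R}^{(k)})$, in which only the $k$-th column carries the extra factor $\log(1+x)$ while the others carry plain moments. This accounts for the two-case split in $r_{i,j}^{(k)}$ between $j\neq k$ and $j=k$, and the rank-deficient $p<n$ situation, with its $n-p$ null eigenvalues, is what generates the polynomial third branch for $n_\mathrm{min}+1\leq j\leq n$.

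The main obstacle is the degenerate case, i.e.\ when $\Phi^{-1}$ has repeated eigenvalues ($m_i>1$): there both the HCIZ evaluation and the $\mu$-Vandermonde in the denominator become singular, so one must take a confluent limit $\mu_j\to\mu_e$ within each multiplicity block. I would handle this by Taylor/l'H\^opital expansion in the coinciding arguments; differentiating $e^{-\lambda\mu}$ repeatedly in $\mu$ brings down powers of $\lambda$, which is exactly how the offsets $a_i=\sum_{k=1}^{e_i}m_k-i$, the index map $e_i$, and the signs $(-1)^{a_i}$ enter the exponents of $r_{i,j}^{(k)}$. Showing that these limits assemble cleanly into the compact determinant stated in the theorem, rather than an unmanageable limit of ratios, is the technically delicate point and the step I would treat most carefully.
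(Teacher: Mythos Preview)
The paper does not prove this theorem at all: it is quoted verbatim from \cite{Chiani_Ergodic_MI_intf_networks_IT10} as an external tool (``Let us recall it here''), and the paper's own contribution begins only with Theorem~\ref{thm_bound_IA}. So there is no proof in the paper to compare your proposal against.

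That said, your sketch is a faithful outline of the argument that Chiani et al.\ actually give. The reduction to the nonzero eigenvalues of the correlated Wishart matrix, the use of the Harish--Chandra--Itzykson--Zuber integral to obtain the determinantal form of the joint eigenvalue density, the application of the Andr\'eief identity to collapse the multidimensional integral into a sum of $n_\mathrm{min}$ determinants (with the $\log(1+x)$ factor landing in a single column), and the confluent limit in the $\mu_j$ to handle repeated eigenvalues of $\Phi^{-1}$ are exactly the ingredients of their proof. Your identification of the confluent step as the delicate one is also accurate: it is precisely where the indices $e_i$, the offsets $a_i$, and the extra column block for $j>n_\mathrm{min}$ arise. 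Nothing in your plan is wrong or missing; it simply reproduces the derivation in the cited reference rather than anything done in this paper.
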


We now derive an approximate closed-form lower bound (shown to be tight in Section \ref{section_simuls_validation}) for the ergodic mutual information achieved by interference alignment in the case where the optimum (single-user) decoder is used at the receiver.

\begin{theorem} \label{thm_bound_IA}
The ergodic mutual information achieved by the $k$-th user of a Gaussian MIMO IC with IA where user $i$ splits its transmit power $P_i$ evenly among $d_i$ uncorrelated transmit dimensions, is approximately lower-bounded as
\begin{equation}
\expect{\mathbf{H}}{I(\mathbf{s}_k;\mathbf{y}_k|\mathbf{H})} \apprge  \mathcal{C}_{\mathrm{SU}} \left( N_k,d_k,\Psi_k \right), \label{eq_thm_MI_bound}
\end{equation}
where $\Psi_k=\left[\begin{array}{cc} \frac{P_k}{d_k\sigma^2} \Id_{d'_k} & \mathbf{0}\\ \mathbf{0} & \frac{P_k}{d_k(\sigma^2 +\sum_{i\neq k} P_i)}  \Id_{N_k-d'_k} \end{array} \right]$.
\end{theorem}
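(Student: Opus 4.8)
The plan is to start from an exact expression for the per-user mutual information and then replace the random interference-plus-noise covariance by a deterministic surrogate. Conditioned on $\mathbf{H}$, the signal $\mathbf{s}_k$, the aggregate interference $\sum_{j\neq k}\mathbf{H}_{kj}\mathbf{V}_{j}\mathbf{s}_j$ and the noise $\mathbf{n}_k$ are jointly Gaussian and mutually independent, so the mutual information is \emph{exactly}
\[
I(\mathbf{s}_k;\mathbf{y}_k|\mathbf{H}) = \log\det\!\left(\Id_{N_k} + \mathbf{R}_k^{-1}\mathbf{H}_{kk}\mathbf{V}_{k}\mathbf{Q}_k\mathbf{V}_{k}\her\mathbf{H}_{kk}\her\right),
\]
with transmit covariance $\mathbf{Q}_k=\tfrac{P_k}{d_k}\Id_{d_k}$ and $\mathbf{R}_k=\sigma^2\Id_{N_k}+\sum_{j\neq k}\mathbf{H}_{kj}\mathbf{V}_{j}\mathbf{Q}_j\mathbf{V}_{j}\her\mathbf{H}_{kj}\her$.

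First I would use the alignment condition \eqref{equ_IAcondA} to block-diagonalize $\mathbf{R}_k$. Completing the truncated unitary $\mathbf{U}_{k}$ to a full unitary $\mathbf{W}_k=[\mathbf{U}_{k}\ \mathbf{U}_{k}^\perp]$ and rotating coordinates (which leaves the determinant unchanged), \eqref{equ_IAcondA} forces $\mathbf{U}_{k}\her\mathbf{H}_{kj}\mathbf{V}_{j}=0$, confining all interference to the $(N_k-d'_k)$-dimensional range of $\mathbf{U}_{k}^\perp$. Hence $\mathbf{W}_k\her\mathbf{R}_k\mathbf{W}_k$ is block diagonal, with a clean block $\sigma^2\Id_{d'_k}$ on the interference-free subspace and $\mathbf{R}_k^\perp=\sigma^2\Id_{N_k-d'_k}+\sum_{j\neq k}(\mathbf{U}_{k}^\perp)\her\mathbf{H}_{kj}\mathbf{V}_{j}\mathbf{Q}_j\mathbf{V}_{j}\her\mathbf{H}_{kj}\her\mathbf{U}_{k}^\perp$ on its complement. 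Writing $\tilde{\mathbf{H}}_k=\mathbf{W}_k\her\mathbf{H}_{kk}\mathbf{V}_{k}$ and applying Sylvester's identity gives $I=\log\det(\Id_{d_k}+\mathbf{Q}_k\tilde{\mathbf{H}}_k\her\,\mathrm{diag}(\sigma^{-2}\Id_{d'_k},(\mathbf{R}_k^\perp)^{-1})\tilde{\mathbf{H}}_k)$.

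The central step is to replace the random block $\mathbf{R}_k^\perp$ by its mean. Using $\expect{}{\mathbf{H}_{kj}\mathbf{A}\mathbf{H}_{kj}\her}=\trace(\mathbf{A})\Id$ together with $\mathbf{V}_{j}\her\mathbf{V}_{j}=\Id$ and $\trace(\mathbf{Q}_j)=P_j$, the interference contribution averages to $(\sum_{j\neq k}P_j)\Id$, so $\expect{}{\mathbf{R}_k^\perp}=(\sigma^2+\sum_{j\neq k}P_j)\Id_{N_k-d'_k}$. Because $\mathbf{R}\mapsto\log\det(\Id+\mathbf{M}\mathbf{R}^{-1})$ is convex in the positive-definite noise covariance $\mathbf{R}$ for fixed $\mathbf{M}\succeq 0$ — the worst-case-noise convexity, which I would verify by checking that the second directional derivative $\trace(\mathbf{S}\mathbf{D}\mathbf{S}\mathbf{D})-\trace(\mathbf{P}\mathbf{D}\mathbf{P}\mathbf{D})$ is nonnegative whenever $\mathbf{S}=\mathbf{R}^{-1}\succeq(\mathbf{R}+\mathbf{M})^{-1}=\mathbf{P}$ — Jensen's inequality yields $\expect{}{I}\geq\expect{}{\log\det(\Id_{d_k}+\tilde{\mathbf{H}}_k\her\Psi_k\tilde{\mathbf{H}}_k)}$, where substituting the clean block $\sigma^2\Id$ and the averaged block $(\sigma^2+\sum P_j)\Id$ into $\mathbf{Q}_k\,\mathrm{diag}(\cdots)$ reproduces exactly the stated $\Psi_k$. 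Finally, by rotation invariance of the i.i.d.\ Gaussian ensemble under left multiplication by the unitary $\mathbf{W}_k\her$ and right multiplication by the truncated unitary $\mathbf{V}_{k}$, $\tilde{\mathbf{H}}_k$ is $N_k\times d_k$ i.i.d.\ Gaussian, so $\tilde{\mathbf{H}}_k\her$ plays the role of the $d_k\times N_k$ matrix in the definition of $\mathcal{C}_{\mathrm{SU}}$ and the expectation equals $\mathcal{C}_{\mathrm{SU}}(N_k,d_k,\Psi_k)$ by the Chiani et al.\ theorem.

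The main obstacle — and the reason the bound is only approximate ($\apprge$) rather than exact — is that the IA precoders $\mathbf{U}_{k},\mathbf{V}_{k}$ are deterministic functions of the entire channel realization, including $\mathbf{H}_{kk}$ and the cross channels $\mathbf{H}_{kj}$. This spoils the three statistical assumptions used above: the identity $\expect{}{\mathbf{R}_k^\perp}=(\sigma^2+\sum P_j)\Id$ presumes $\mathbf{U}_{k}^\perp$ independent of the $\mathbf{H}_{kj}$; the Jensen step presumes $\tilde{\mathbf{H}}_k$ independent of $\mathbf{R}_k^\perp$; and the final identification presumes $\tilde{\mathbf{H}}_k$ exactly i.i.d.\ Gaussian. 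Each holds exactly only when the precoders are independent of the channels. I would argue these correlations are weak in the relevant regime and defer their quantitative impact to the numerical validation in Section~\ref{section_simuls_validation}, which is precisely where the tightness claim is substantiated.
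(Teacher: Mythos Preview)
Your approach mirrors the paper's almost step for step: rotate by the completed unitary $[\mathbf{U}_k\ \mathbf{U}_k^\perp]$, exploit \eqref{equ_IAcondA} to block-diagonalize the interference-plus-noise covariance, apply the worst-case-noise convexity (the paper cites \cite{Diggavi_Cover_worst_additive_noise_IT01} for this) to push the expectation inside, and identify the result with $\mathcal{C}_{\mathrm{SU}}$ via Chiani et al.

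One refinement you missed, though: the IA constraints \eqref{equ_IAcondA} involve only the \emph{cross} channels $\mathbf{H}_{ij}$, $i\neq j$, so all precoders $\mathbf{U}_k,\mathbf{V}_k$ (and hence $\mathbf{W}_k$) are functions of $\{\mathbf{H}_{ij}\}_{i\neq j}$ alone and are genuinely independent of the direct link $\mathbf{H}_{kk}$. (The rank condition \eqref{equ_IAcondB} holds almost surely and does not further constrain the solution.) The paper uses this explicitly: conditionally on the cross channels, $\mathbf{W}_k\her\mathbf{H}_{kk}\mathbf{V}_k$ is \emph{exactly} $N_k\times d_k$ i.i.d.\ Gaussian and independent of $\mathbf{K}_I$, so it can be replaced by a fresh Gaussian matrix $\bar{\mathbf{H}}_0$. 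This means two of the three ``approximations'' you list are in fact exact: the identification of $\tilde{\mathbf{H}}_k$ as i.i.d.\ Gaussian, and the independence needed for the Jensen step. The only place an approximation enters is in evaluating $\expect{\mathbf{H}_I}{\mathbf{K}_I}$, where the dependence of $\mathbf{U}_k^\perp$ and the $\mathbf{V}_j$ on the interference channels prevents an exact computation; the paper handles this by treating $(\mathbf{U}_k^\perp)\her\mathbf{H}_I\mathbf{V}_I$ as if it were i.i.d.\ Gaussian, yielding $\mathbf{Q}_Z$. So your argument is correct, but you have over-counted the sources of approximation.
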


\begin{proof}

Let us assume that the transmission scheme of \eqref{equ_yibar} is in use, with unitary, orthogonal precoding vectors fulfilling the conditions of eq.~\eqref{equ_IAcondA}.
Focusing on the signal received by user $k$, let $\mathbf{H}_0=\mathbf{H}_{k,k}$ denote the direct channel of the user of interest, $\mathbf{Q}_0=\mathbf{Q}_k$ the covariance of its transmitted signal, and let $\mathbf{H}_I=\left[\mathbf{H}_{k,1}\ldots \mathbf{H}_{k,k-1} ,\mathbf{H}_{k,k+1}\ldots \mathbf{H}_{k,K}\right]$ be the matrix containing the channel coefficients of the $K-1$ interfering links (note that we drop the dependency on receiver index $k$ for notational simplicity). Furthermore, by assumption, all users are transmitting spatially white signals of respective powers $P_i$, i.e. $\mathbf{Q}_k=\frac{P_k}{d_k}\Id_{d_k}$. 

The mutual information for user $k$ is
{\small 
\begin{eqnarray}
I(\mathbf{s}_k;\mathbf{y}_k|\mathbf{H})&=& \log\frac{\det\left(\sigma^2\Id_{N_k} +\mathbf{H}_0 \mathbf{V}_0\mathbf{Q}_0\mathbf{V}_0\her\mathbf{H}_0\her + \sum_{i\neq k} \mathbf{H}_{k,i}\mathbf{V}_{i}\mathbf{Q}_i\mathbf{V}_{i}\her\mathbf{H}_{k,i}\her\right)}{\det\left(\sigma^2\Id_{N_k} + \sum_{i\neq k} \mathbf{H}_{k,i}\mathbf{V}_{i}\mathbf{Q}_i\mathbf{V}_{i}\her\mathbf{H}_{k,i}\her\right)}  \nonumber\\
&=&\log\det\left(\Id_{N_k} + \mathbf{H}_0\mathbf{V}_0\mathbf{Q}_0\mathbf{V}_0\her\mathbf{H}_0\her (\sigma^2\Id_{N_k} +\mathbf{H}_I\mathbf{V}_I\mathbf{Q}_I\mathbf{V}_I\her\mathbf{H}_I\her)^{-1}\right).
\end{eqnarray} }
where $\mathbf{V}_{I}=\mathrm{diag}(\mathbf{V}_1, \ldots, \mathbf{V}_{k-1}, \mathbf{V}_{k+1},\ldots, \mathbf{V}_{K})$, and $\mathbf{Q}_I=\mathrm{diag}(\mathbf{Q}_1, \ldots, \mathbf{Q}_{k-1}, \mathbf{Q}_{k+1},\ldots, \mathbf{Q}_{K})$.

According to the interference alignment conditions \eqref{equ_IAcondA}, the $d'_k$ orthonormal columns of $\mathbf{U}_k$ span a $d'_k$-dimensional subspace which remains free of interference. The sum of all the interference terms at receiver $k$ is therefore contained within a subspace of dimension $N_k-d'_k$. Let $\mathbf{U}_k^\bot$ denote a $N_k\times N_k-d'_k$ matrix containing an orthonormal basis of this interference subspace.

Furthermore, let $\mathbf{U}=\left[ \mathbf{U}_k \mathbf{U}_k^\bot \right]$. Since $\mathbf{U}$ is unitary, we can rewrite
\begin{eqnarray}
I(\mathbf{s}_k;\mathbf{y}_k|\mathbf{H}) & = & \log\det\left( \mathbf{U}\her \left( \Id_{N_k} + \mathbf{H}_0\mathbf{V}_0\mathbf{Q}_0\mathbf{V}_0\her\mathbf{H}_0\her \mathbf{U}\mathbf{U}\her (\sigma^2\Id_{N_k} +\mathbf{H}_I\mathbf{V}_I\mathbf{Q}_I\mathbf{V}_I\her \mathbf{H}_I\her)^{-1}\right) \mathbf{U} \right) \\
&=& \log\det\left( \Id_{N_k} + \mathbf{U}\her\mathbf{H}_0\mathbf{V}_0\mathbf{Q}_0\mathbf{V}_0\her\mathbf{H}_0\her \mathbf{U}  \mathbf{U}\her(\sigma^2\Id_{N_k} +\mathbf{H}_I\mathbf{V}_I\mathbf{Q}_I\mathbf{V}_I\her\mathbf{H}_I\her)^{-1} \mathbf{U} \right) \\
%= \log\det\left( \Id_{N_{R,k}} + \mathbf{U}\her\mathbf{H}_0\mathbf{V}_0\mathbf{Q}_0\mathbf{V}_0\her\mathbf{H}_0\her \mathbf{U}(\mathbf{U}\her\left(\sigma^2\Id_{N_{R,k}} +\mathbf{H}_I\mathbf{V}_I\mathbf{Q}_I\mathbf{V}_I\her\mathbf{H}_I\her\right)\mathbf{U})^{-1}  \right) \\
&=& \log\det\left( \Id_{N_k} + \mathbf{U}\her\mathbf{H}_0\mathbf{V}_0\mathbf{Q}_0\mathbf{V}_0\her\mathbf{H}_0\her \mathbf{U} (\sigma^2\Id_{N_k} + \mathbf{U}\her\mathbf{H}_I\mathbf{V}_I\mathbf{Q}_I\mathbf{V}_I\her\mathbf{H}_I\her\mathbf{U})^{-1}  \right),
\end{eqnarray}
where we used the fact that $\mathbf{U}^{-1}=\mathbf{U}\her$.
Since we are interested in the ergodic mutual information $\expect{\mathbf{H}}{I(\mathbf{s}_k;\mathbf{y}_k|\mathbf{H})}$, we need to consider the statistics of the random variables involved in the above equation. Note in particular that $\mathbf{U}$ and $\mathbf{V}_0$ are implicit functions (through \eqref{equ_IAcondA}) of $\mathbf{H}_{i,j}$, $1 \leq i\neq j \leq K$. However, they are independent of $\mathbf{H}_0$. Since $\mathbf{H}_0$ is Gaussian i.i.d. distributed, and since this distribution is invariant by multiplication with an independent unitary matrix, $\mathbf{U}\her\mathbf{H}_0\mathbf{V}_0$ is a $N_k \times d_k$ matrix with complex Gaussian i.i.d. coefficients of unit variance \cite{Muirhead_multivariate_theory}. Letting $\bar{\mathbf{H}}_0$ denote a matrix with the same distribution, we can rewrite the expectation over $\mathbf{H}$ as
\begin{eqnarray}
\expect{\mathbf{H}}{I(\mathbf{s}_k;\mathbf{y}_k|\mathbf{H})} & = & \mathrm{E}_{\bar{\mathbf{H}}_0,\mathbf{H}_I} \left[\log\det\left( \Id_{N_k} + \bar{\mathbf{H}}_0\mathbf{Q}_0\bar{\mathbf{H}}_0\her (\sigma^2\Id_{N_k} +\mathbf{U}\her\mathbf{H}_I\mathbf{V}_I\mathbf{Q}_I\mathbf{V}_I\her\mathbf{H}_I\her\mathbf{U})^{-1}  \right) \right].
\end{eqnarray}

Let us now focus on the rotated covariance of the interference plus noise term, $\mathbf{K}_I = \sigma^2\Id_{N_k} +\mathbf{U}\her\mathbf{H}_I\mathbf{V}_I\mathbf{Q}_I\mathbf{V}_I\her\mathbf{H}_I\her\mathbf{U}$.
In particular, note that $\mathbf{U}\her\mathbf{H}_I\mathbf{V}_I=\left[\begin{array}{c} \mathbf{U}_k\her \mathbf{H}_I\mathbf{V}_I \\ {\mathbf{U}_k^\bot}\her \mathbf{H}_I\mathbf{V}_I\end{array} \right] = \left[\begin{array}{c} \mathbf{0} \\ {\mathbf{U}_k^\bot}\her \mathbf{H}_I\mathbf{V}_I\end{array} \right]$, since  $\mathbf{U}_k\her \mathbf{H}_I\mathbf{V}_I$ is a block matrix where each block $\mathbf{U}_k\her \mathbf{H}_{k,i} \mathbf{V}_i$ is zero according to \eqref{equ_IAcondA}. The statistics of ${\mathbf{U}_k^\bot}\her \mathbf{H}_I\mathbf{V}_I$ are hard to come by, since the terms of the matrix products are implicitly dependent through \eqref{equ_IAcondA}.
As an approximation, we will make the assumption that ${\mathbf{U}_k^\bot}\her \mathbf{H}_I\mathbf{V}_I$ behaves like a Gaussian i.i.d. distributed random matrix $\mathbf{H}_I'$ with unit variance per component (this would be exact if $\mathbf{U}_k^\bot$ and $\mathbf{V}_i$ were independent of $\mathbf{H}$).
In that case, $\expect{\mathbf{H}_I}{\mathbf{K}_I} \approx  \mathbf{Q}_Z$, where
\begin{eqnarray}
\mathbf{Q}_Z & = & \expect{\mathbf{H}_I'}{\sigma^2\Id_{N_k} +\left[\begin{array}{c} \mathbf{0} \\ \mathbf{H}_I'\end{array} \right]\mathbf{Q}_I \left[\begin{array}{c} \mathbf{0} \\ \mathbf{H}_I'\end{array} \right]\her} \\
 &=&\left[\begin{array}{cc} \sigma^2 \Id_{d'_k} & \mathbf{0}\\ \mathbf{0} & (\sigma^2 +\sum_{i\neq k} P_i) \Id_{N_k-d'_k} \end{array} \right]. 
\end{eqnarray} 

Using those notations, one obtains that 
\begin{eqnarray}
\expect{\mathbf{H}_I}{I(\mathbf{s}_k;\mathbf{y}_k|\mathbf{H})}& = &  \expect{\mathbf{H}_I}{\log\det\left( \Id_{N_k} + \bar{\mathbf{H}}_0\mathbf{Q}_0\bar{\mathbf{H}}_0\her\mathbf{K}_I^{-1}\right)} \\
& \geq & \log\det\left( \Id_{N_k} + \bar{\mathbf{H}}_0\mathbf{Q}_0\bar{\mathbf{H}}_0\her\left(\expect{\mathbf{H}_I}{\mathbf{K}_I}\right)^{-1}\right) \\
& \gtrapprox & \log\det\left( \Id_{N_k} + \bar{\mathbf{H}}_0\mathbf{Q}_0\bar{\mathbf{H}}_0\her \mathbf{Q}_Z^{-1}\right).  \label{approxgeq}
\end{eqnarray}
where the lower bound follows from the convexity in $\mathbf{X}$ of $\log\det\left(\Id+\mathbf{K}\mathbf{X}^{-1} \right)$ for positive definite matrices $\mathbf{K}$ and $\mathbf{X}$ \cite[Lemma II.3]{Diggavi_Cover_worst_additive_noise_IT01}.

Since $\mathbf{Q}_0=\frac{P_k}{d_k}\Id_{d_k}$, we can let $\Psi_k=\frac{P_k}{d_k}\mathbf{Q}_Z^{-1}=\left[\begin{array}{cc} \frac{P_k}{d_k\sigma^2} \Id_{d'_k} & \mathbf{0}\\ \mathbf{0} & \frac{P_k}{d_k(\sigma^2 +\sum_{i\neq k} P_i)}  \Id_{N_k-d'_k} \end{array} \right]$, and obtain
\begin{eqnarray}
 \log\det\left( \Id_{N_k} + \bar{\mathbf{H}}_0\mathbf{Q}_0\bar{\mathbf{H}}_0\her \mathbf{Q}_Z^{-1}\right) & = &  \log\det\left( \Id_{N_k} + \bar{\mathbf{H}}_0 \bar{\mathbf{H}}_0\her \Psi_k\right) \\
&=& \log\det\left( \Id_{d_k} + \bar{\mathbf{H}}_0\her \Psi_k\bar{\mathbf{H}}_0 \right) \label{final_log_H0}.
\end{eqnarray}
Substitution of \eqref{final_log_H0} in \eqref{approxgeq} yields 
\begin{equation}
\expect{\mathbf{H}}{I(\mathbf{s}_k;\mathbf{y}_k|\mathbf{H})} \apprge \expect{\bar{\mathbf{H}}_0}{\log\det\left( \Id_{d_k} + \bar{\mathbf{H}}_0\her \Psi_k\bar{\mathbf{H}}_0 \right)}. \label{bound_expectation}
\end{equation}
Thanks to the diagonal structure of $\Psi_k$, the expectation over $\bar{\mathbf{H}}_0$ can be evaluated using results from \cite{Chiani_Ergodic_MI_intf_networks_IT10}, since the right-hand side of \eqref{bound_expectation} is by definition equal to $\mathcal{C}_{\mathrm{SU}} \left( N_k,d_k,\Psi_k \right)$, giving the result in \eqref{eq_thm_MI_bound}.
\end{proof}

\subsection{Projection Receiver}

\label{section_ergodic_MI_projectionreceiver}
We now focus on the projection receiver defined in \eqref{equ_yibar_zf}, and extend the ergodic rate formula of \cite{Guillaud_Tresch_NewcomWS_BCN_09} to the case of receive diversity $d'_i\neq d_i$.

The ergodic rate achieved by user $k$ with the projection receiver under the assumption of Gaussian i.i.d. channel coefficients was characterized in \cite{Guillaud_Tresch_NewcomWS_BCN_09}, based on the observation that since $\mathbf{U}_{k}$ and $\mathbf{V}_{k}$ are truncated unitary matrices, $\bar{\mathbf{H}}_{kk}=\mathbf{U}_{k}\her\mathbf{H}_{kk}\mathbf{V}_{k}$ has Gaussian i.i.d. coefficients of the same variance as $\mathbf{H}_{kk}$, and the noise term $\mathbf{U}_{k}\her\mathbf{n}_k$ is Gaussian i.i.d. with variance $\sigma^2$. Therefore, \eqref{equ_yibar_zf} describes the transmission over a Rayleigh-fading channel of dimensions $d'_k\times d_k$, with ergodic mutual information
\begin{align}
\expect{\mathbf{H}}{I(\mathbf{s}_k;\bar{\mathbf{y}}_k|\mathbf{H})}=
\expect{\bar{\mathbf{H}}_{kk}}{\log\det\left(\Id_{d_k}+\bar{\mathbf{H}}_{kk}\mathbf{Q}_{k}\bar{\mathbf{H}}_{kk}\her\right)}.
\end{align}
Under the assumption of a spatially white transmit signal inside the subspace defined by $\mathbf{V}_{k}$ ($\mathbf{Q}_{k}=\frac{P_k}{d_k }\Id_{d_k}$), we define the SNR $\rho_k=\frac{P_k}{d_k \sigma^2}$ and have directly from \cite{ShinLeeit03} that
\begin{equation} \label{eq_rateexpectation_projreceiver}
\expect{\bar{\mathbf{H}}_{kk}}{ \log\det\left(\Id_{d_k}+\frac{P_k}{d_k \sigma^2} \bar{\mathbf{H}}_{kk}\bar{\mathbf{H}}_{kk}\her\right) } = C(d_k,d'_k,\rho_k),
\end{equation}
where $C(d,d',\rho)$
{\small 
\begin{eqnarray} \label{MIMO_allwhite_ergodic_capa}
&=& e^{\frac{1}{\rho}} \log_2(e)  \sum_{k=0}^{d-1} \sum_{l=0}^{k} \sum_{m=0}^{2l} \Bigg\{ \frac{(-1)^m (2l)! (d'-d+m)!}{2^{2k-m} l! m! (d'-d+l)!}   \left(\begin{array}{c} 2k-2l \\ k-l \end{array} \right)
 \left(\begin{array}{c} 2l + 2d'-2d \\ 2l-m \end{array} \right)
 \sum_{p=1}^{d'-d+m+1} E_p\left(\frac{1}{\rho}\right)
 \Bigg\}
\end{eqnarray} }
and where $E_p(\cdot)$ denotes the exponential integral function of order $p$, i.e.,
\begin{align}
E_p(z)=\int_1^\infty e^{-zx} x^{-p} \textrm{d}x, \quad \textrm{Re}\{z\}>0.
\end{align}

\begin{figure}[ht]
\centering \includegraphics[width=10cm]{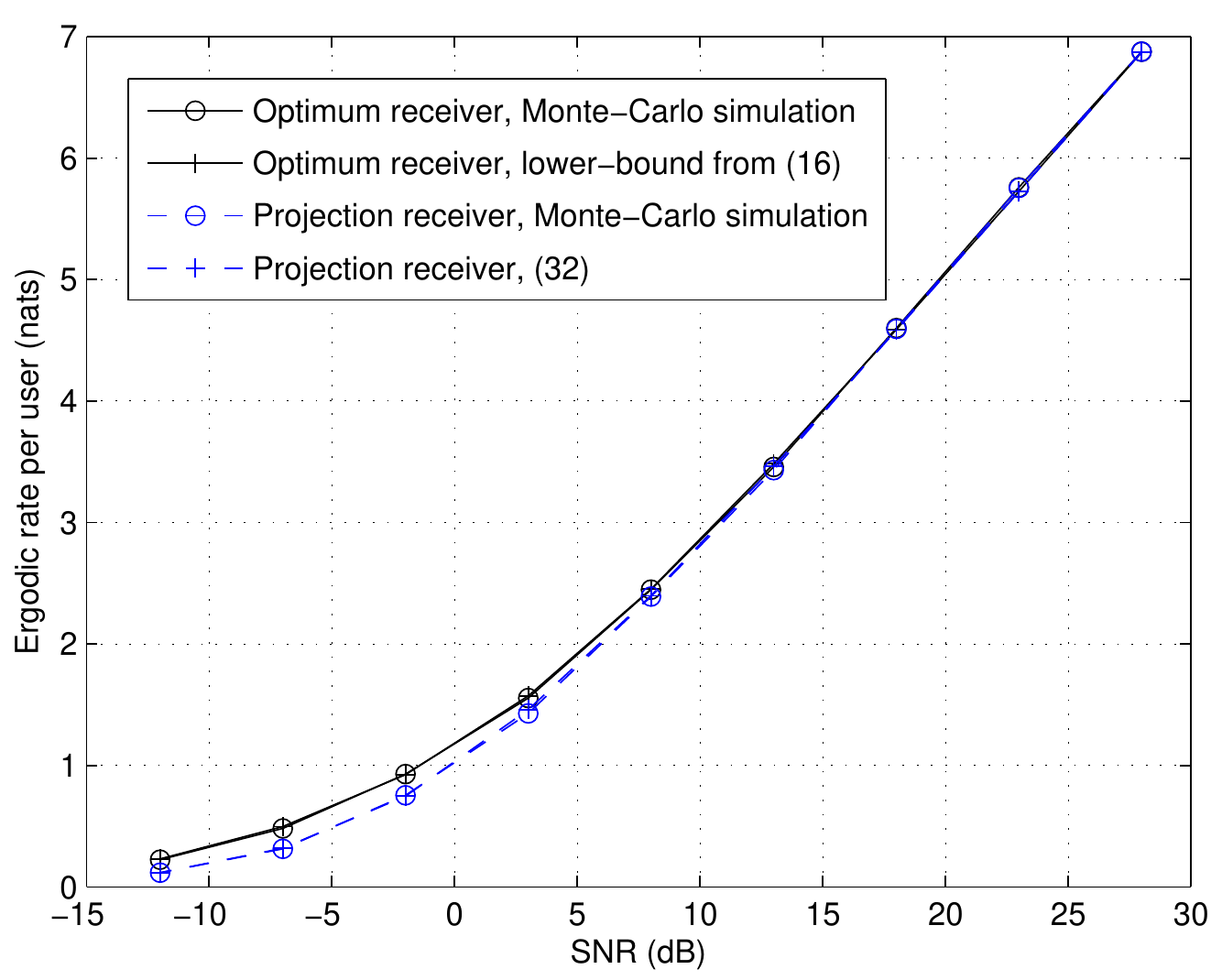}
\caption{Ergodic rate achieved under IA by the optimum and projection receivers, $K=7$ users with $5\times 7$ channels, $d=1, d'=2$.} \label{fig_mc_vs_analytical}
\end{figure}

\subsection{Simulations}
\label{section_simuls_validation}
Fig.~\ref{fig_mc_vs_analytical} depicts the simulated ergodic achievable rates for both the optimum and the projection-based receivers, for a symmetric system with $K=7$ users, each using $\NT=7$ transmit and $\NR=5$ receive antennas, for a range of SNR (the SNR=$P_k/\sigma^2$ is assumed identical for all users). IA is employed with parameters $d=1, d'=2$, i.e. each user receives a single stream with diversity 2.
The results are compared to the analytical formulas established in Sections~\ref{section_ergodic_MI_optimumreceiver} and \ref{section_ergodic_MI_projectionreceiver}. Specifically, the two dashed lines depict the left-hand side (LHS, obtained by Monte-Carlo simulation) and right-hand side (RHS) of \eqref{eq_rateexpectation_projreceiver}, showing an excellent agreement. The solid lines, depicting the LHS and RHS of \eqref{eq_thm_MI_bound}, are essentially identical, showing that the bound of Theorem~\ref{thm_bound_IA} is tight even for relatively low number of interferers ($K-1=6$ here).

\section{Performance Comparison of IA to Distributed, Game-theoretic Approaches}
\label{section_applications}

As an application of the result of Section~\ref{section_ergodic_MI_for_IA}, we now present a comparison of the ergodic rates achievable over the Gaussian MIMO-IC for various DMT points, as well as a comparison of IA with a distributed covariance optimization based on game theory.

\subsection{Ergodic Rates under IA for various DMT}

In this section, we evaluate the influence of the DMT on the achievable ergodic rates established in Sections~\ref{section_sysmodel} and \ref{section_non_square_eqchannel}, both for the optimum receiver and the projection receiver. The results are presented in Fig.~\ref{fig_influence_Dp} for a symmetric system where all the links have $\NT=7$ transmit antennas and $\NR=5$ receive antennas. Depending on $d$ and $d'$, a different number of users can be accommodated. Here, we consider the case of $K=11$ users with $d=d'=1$, as well as the case $K=7$ with $d=1$ and $d'=2$. In both cases, the criterion \eqref{sym_proper_condition} is verified with equality.

 In all cases considered, the advantage of the optimum receiver over the projection receiver is moderate at low SNR, and inexistent at high SNR.
\begin{figure*}[ht]
\centering 
\subfigure[Per user rate]{\includegraphics[width=10cm]{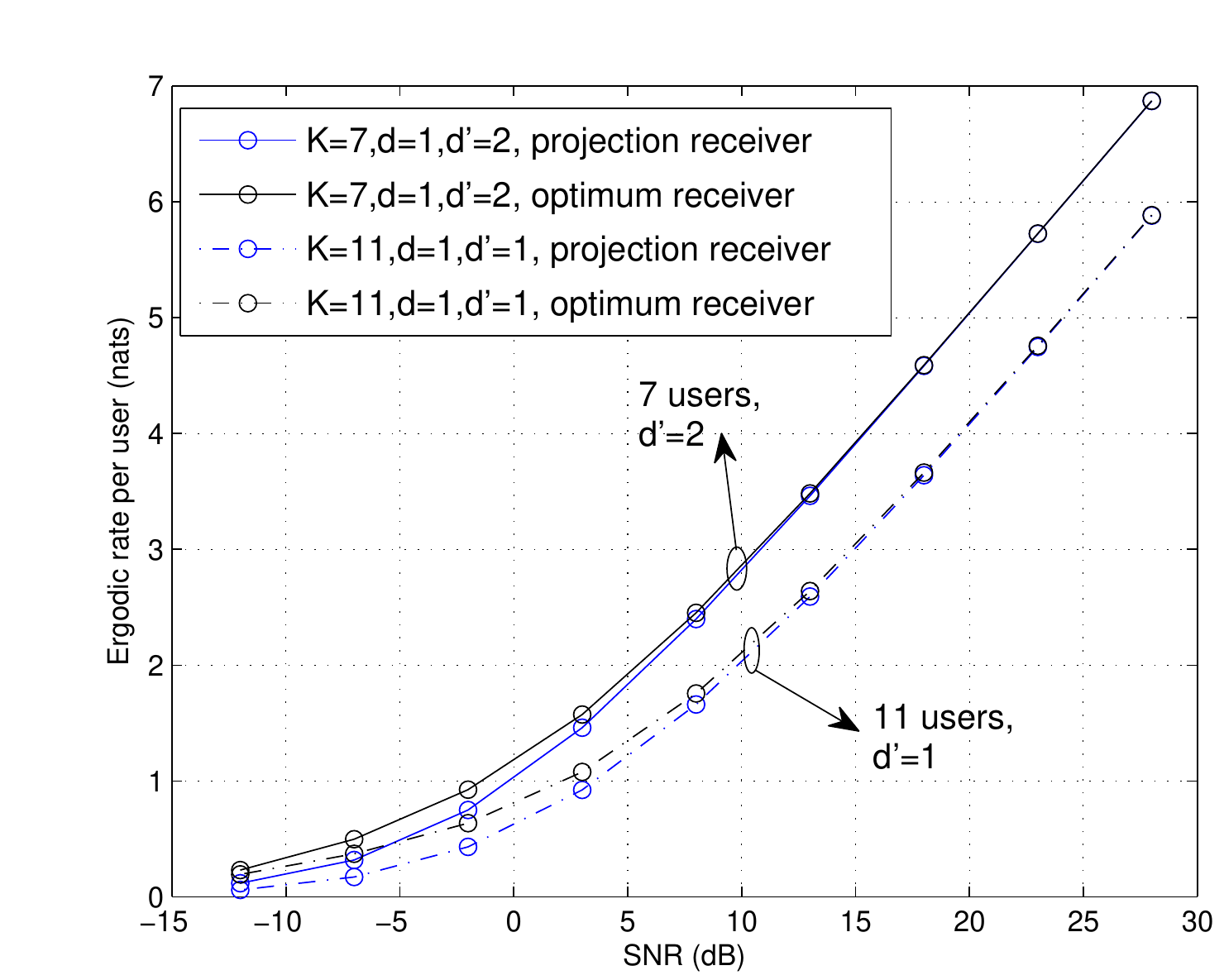} \label{fig_influence_Dp_1}}
\subfigure[Sum-rate]{\includegraphics[width=10cm]{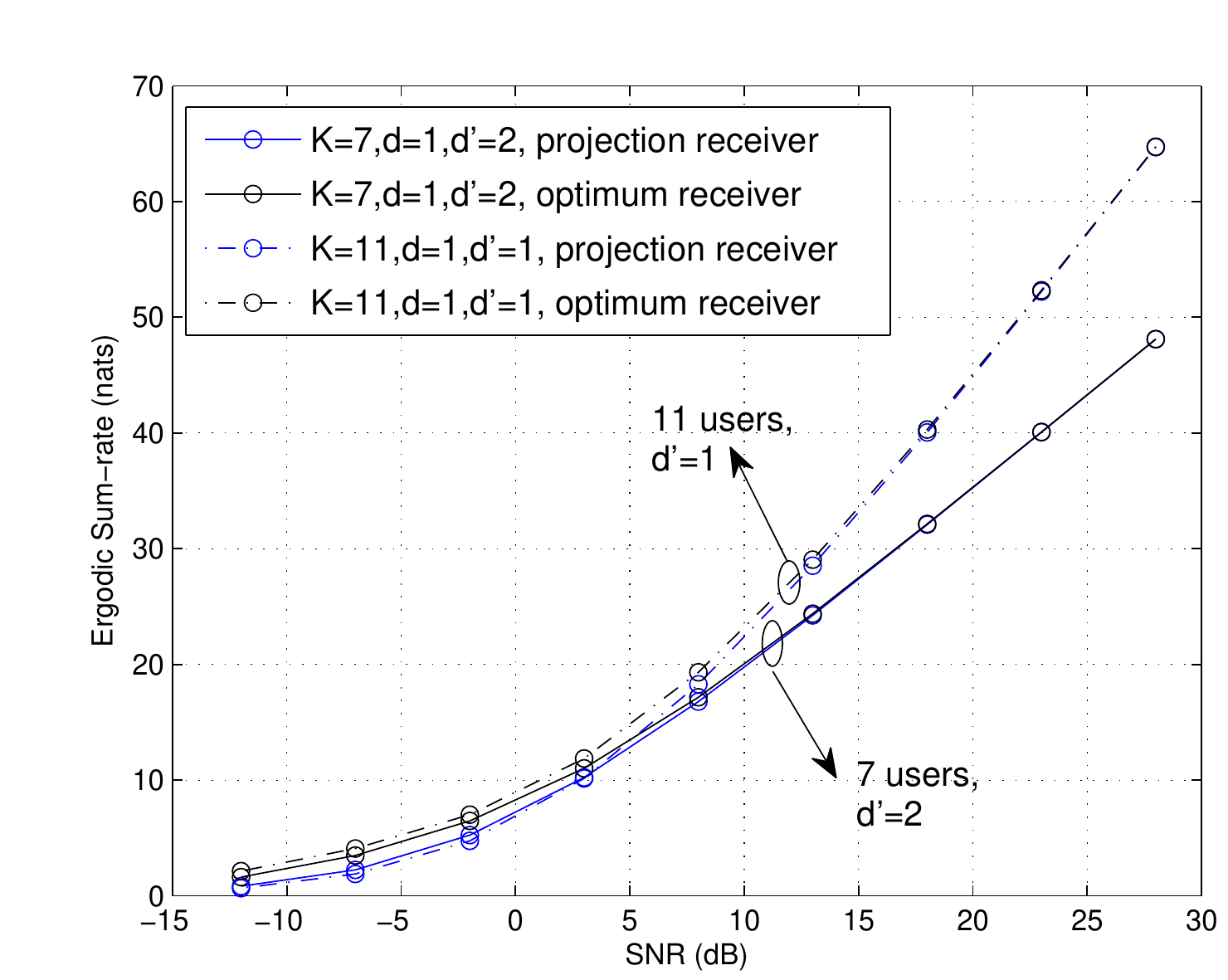} \label{fig_influence_Dp_2}}
\caption{Ergodic rate achievable under IA for symmetric systems with variable number of users and DMT. $\NT=7, \NR=5$. SNR=$P_k/\sigma^2 \ \forall k$.} \label{fig_influence_Dp}
\end{figure*}
In Fig.~\ref{fig_influence_Dp_1}, it is noticeable that the receive diversity is beneficial to the ergodic rate achievable per-user, since the curve for $d'=2$ clearly dominates the $d'=1$ curve (the transmit signal dimension is $d=1$ in both cases).
However, when considering the sum-rate (Fig. \ref{fig_influence_Dp_2}), we come to an opposite conclusion, since the loss in per-user ergodic rate when going from receive diversity $d'=2$ to $d'=1$ is more than offset by the fact that more users ($K=11$) can be supported while still fulfilling the IA conditions when $d'=1$.
Interestingly, in the SNR range of 0 to 5 dB, Fig.~\ref{fig_influence_Dp_1} indicates that it is possible to trade the number of users $K$ for the per-user rate (and presumably the outage performance, although this is not shown here) through the introduction of diversity in the interference-free subspace introduced in Section~\ref{section_non_square_eqchannel}, while the sum-rates (Fig.~\ref{fig_influence_Dp_2}) remain comparable.\\

\subsection{Comparison with distributed, game-theoretic covariance optimization}

In order to illustrate the use of the results of Section~\ref{section_ergodic_MI_for_IA}, we compare the performance achievable by IA to the distributed covariance optimization scheme of Scutari et al. \cite{Scutari_MIMO_Iterative_WTF_IT09}, which is based on a game-theoretic analysis of the choice of the transmit covariances. In the method of \cite{Scutari_MIMO_Iterative_WTF_IT09}, each transmitter does waterfilling based on the interference covariance at his intended receiver resulting from the previous iterations. This yields an inherently distributed algorithm,  which exhibits faster convergence properties than the leakage-based IA algorithm of \cite{Gomadamit08}.
However, it does not always converge to a Nash equilibrium (depending on $\mathbf{H}$). In case of non-convergence, we draw another channel realization and ignore the channel realization in the mutual information average.

The results are presented in Fig.~\ref{fig_ia_vs_gt}. IA is applied for the same two points of the DMT as in the previous section ($K=7$ users with receive diversity $d'=2$, and 11 users with diversity 1 each). 
For SNR above 5dB, IA presents an advantage over the distributed waterfilling game method, both in terms of rate per user and sum-rate. 
\begin{figure*}[ht]
\centering
\subfigure[Per user rate]{\includegraphics[height=8.4cm]{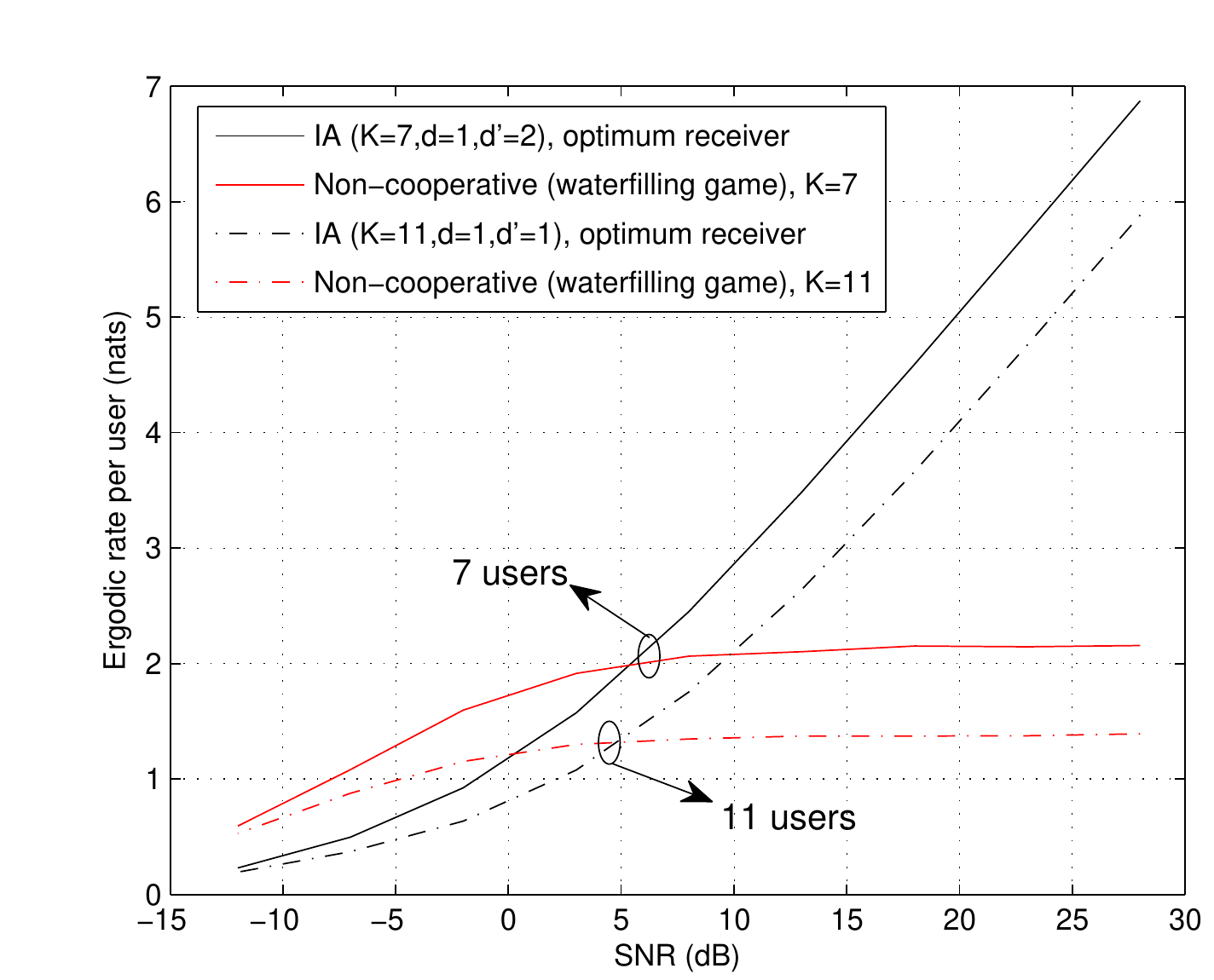} \label{fig_ia_vs_gt_user}}
\subfigure[Sum-rate]{\includegraphics[height=8.3cm]{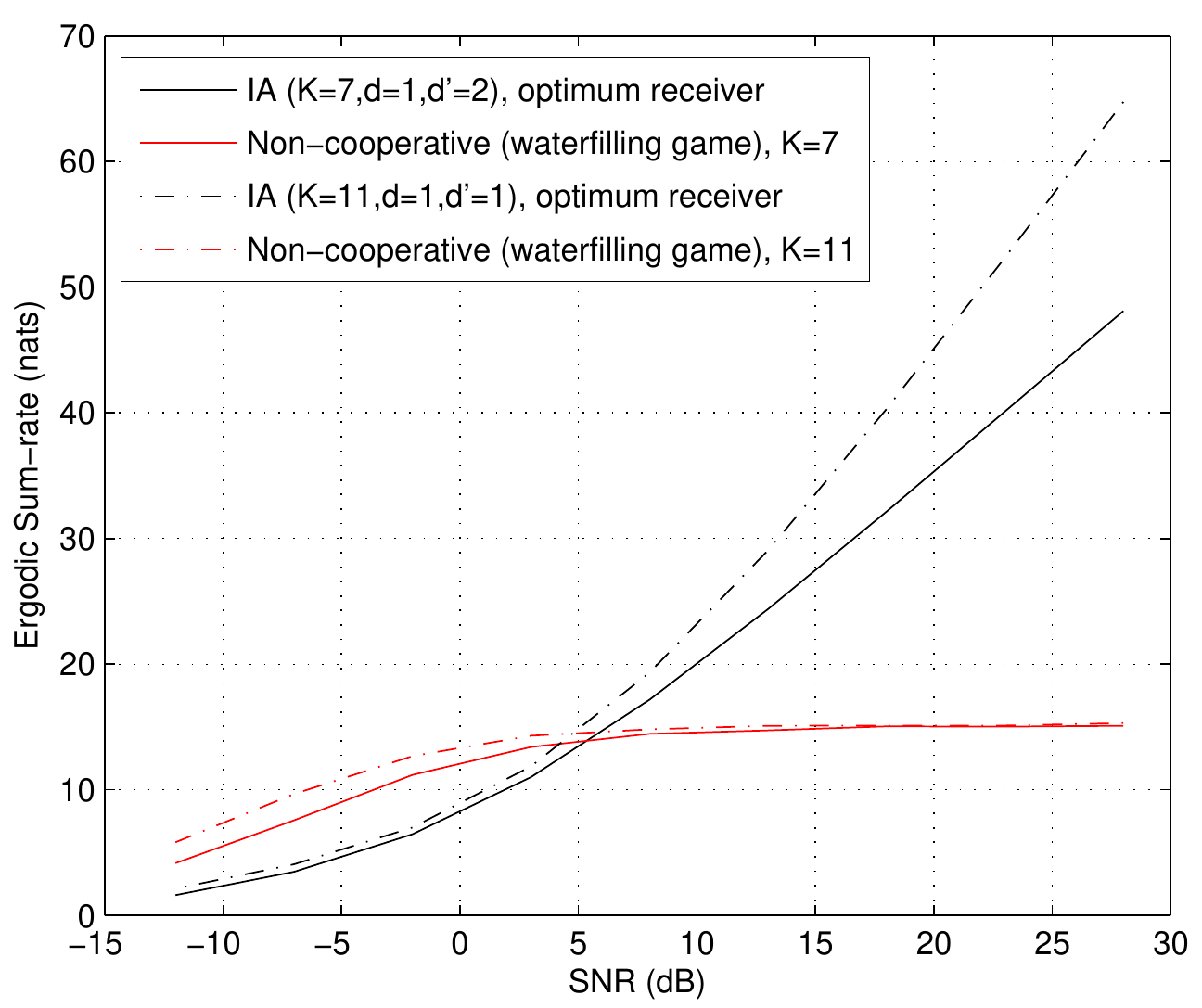} \label{fig_ia_vs_gt_SR}}
\caption{Ergodic mutual information achieved by IA and waterfilling game, symmetric system, $\NT=7, \NR=5$. SNR=$P_k/\sigma^2 \ \forall k$.} \label{fig_ia_vs_gt}
\end{figure*}
Fig.~\ref{fig_ia_vs_gt_user} shows that both approaches benefit from having fewer users in terms of per-user rates. However, in terms of sum-rate (Fig.~\ref{fig_ia_vs_gt_SR}), it is noticeable that the influence of the number of users on the distributed waterfilling game is small, while the IA method benefits more dramatically from having more users, since in that case the high-SNR slope follows the total DoF.  \\

\section{Conclusion}

We introduced a generalization of IA whereby receive diversity can be obtained and exploited inside the interference-free subspace at the receiver, and generalized the achievability criterion of IA to this case, thus defining a multi-user DMT for the interference channel.
In a second step, we obtained closed-form expressions  for the ergodic mutual information achievable by IA over the Gaussian MIMO-IC under the Gaussian i.i.d. fading assumption, for an arbitrary SNR. 
Finally, we compared the performance of IA (at various operating points allowed by our DMT) to the per-user and sum rates achieved by a distributed, game theoretic method, and provided insight on their relative performance according to the considered SNR range.

% use section* for acknowledgement
\section*{Acknowledgment}

This work was sponsored by the Austria Science Fund (FWF) through grant NFN SISE (S10606), and inspired by activities in the European Commission FP7 Newcom++ network of excellence.
Part of this work was performed while the author was with FTW.
FTW's research is supported in part by the COMET competence center program of the Austrian government and the City of Vienna.

% trigger a \newpage just before the given reference
% number - used to balance the columns on the last page
% adjust value as needed - may need to be readjusted if
% the document is modified later
%\IEEEtriggeratref{8}
% The "triggered" command can be changed if desired:
%\IEEEtriggercmd{\enlargethispage{-5in}}

% references section

% can use a bibliography generated by BibTeX as a .bbl file
% BibTeX documentation can be easily obtained at:
% http://www.ctan.org/tex-archive/biblio/bibtex/contrib/doc/
% The IEEEtran BibTeX style support page is at:
% http://www.michaelshell.org/tex/ieeetran/bibtex/
%\bibliographystyle{IEEEtran}
% argument is your BibTeX string definitions and bibliography database(s)
%\bibliography{IEEEabrv,../bib/paper}
%
% <OR> manually copy in the resultant .bbl file
% set second argument of \begin to the number of references
% (used to reserve space for the reference number labels box)

\bibliographystyle{IEEEtran}
\bibliography{biblio}

% that's all folks
\end{document}